\newcolumntype{Y}{>{\centering\arraybackslash}X}
\crefname{figure}{Figure}{Figures} 
\newtheorem{theorem}{Theorem}
\newtheorem{lemma}{Lemma}
\newtheorem{proposition}{Proposition}
\title{Randomized Algorithms for Monotone Submodular Function Maximization on the Integer Lattice}
\author {
 Alberto Schiabel\textsuperscript{\rm 1},
 Vyacheslav Kungurtsev\textsuperscript{\rm 2},
 Jakub Mare\v{c}ek \textsuperscript{\rm 2}
}
\date{%
 \footnotesize  \textsuperscript{\rm 1} Department of Mathematics, University of Padova\\
 \textsuperscript{\rm 2} Department of Computer Science, Czech Technical University in Prague\\
 alberto.schiabel@studenti.unipd.it, \{vyacheslav.kungurtsev, jakub.marecek\}@fel.cvut.cz
}
\begin{document}
\maketitle              

\begin{abstract}

Optimization problems with set submodular objective functions have many real-world applications.
In discrete scenarios, where the same item can be selected more than once, the domain of the target problem is generalized from a finite set to a bounded integer lattice.
In this work, we consider the problem of maximizing a monotone submodular function on the bounded integer lattice subject to a cardinality constraint.
In particular, we focus on maximizing DR-submodular functions, i.e., functions defined on the integer lattice that exhibit the  \emph{diminishing returns} property.
Given any $\varepsilon > 0$, we present a randomized algorithm with probabilistic guarantees of $\mathcal{O}(1 - \frac{1}{e} - \varepsilon)$ approximation, using a framework inspired by a
\textsc{Stochastic Greedy} algorithm developed for set submodular functions by Mirzasoleiman \emph{et al}.
We then show that, on synthetic DR-submodular functions, applying our proposed algorithm on the integer lattice is faster
than the alternatives, including reducing a target problem to the set domain and then applying the fastest known set submodular maximization algorithm.
\end{abstract}


\section{Introduction}

\noindent In combinatorial optimization, machine learning, and operations research, Submodular function maximization problems are ubiquitous \cite{tohidi:2020}.
Consider, for example, facility location \cite{cornuejols:1977} and sensor placement \cite{krause:2006} problems in operations research.
In machine learning, notable examples include 
experiment design \cite{agrawal:2019,sahin:2020},
dictionary learning \cite{krause:2010,das:2011}, 
and sparsity inducing regularizers \cite{bach:2010}.
In these problems, the goal is to pick a subset of a \emph{ground set} \hbox{$\mathcal{V} \coloneqq \{ 1, 2, \dots, n \}$}
that maximizes a set function $f : 2^{\mathcal{V}} \rightarrow \mathbb{R}$ defined on the powerset of $\mathcal{V}$. \\

\noindent Yet, there are practical applications in which one is not only interested in knowing whether an element $e \in \mathcal{V}$ is selected, but is also concerned with the amount of copies to select for a given $e$.
One such case is represented by the optimal budget allocation problem \cite{alon:2012}.
In these scenarios, the ground set can be considered as a multiset,
or equivalently as a cube $\{ \mathbf{x} \in \mathbb{Z}^{\mathcal{V}}_{+}\ |\ \mathbf{x} \preccurlyeq b \}$ on the integer lattice $\mathbb{Z}^{\mathcal{V}}_{+}$, where $\mathbf{b} \in \mathbb{Z}^{\mathcal{V}}_{+}$
is a known vector such that $b_e$ indicates the quantities available for each element $e \in \mathcal{V}$
and $\mathbf{x} \preccurlyeq \mathbf{y}$ means $x_e \leq y_e$ for every element $e \in \mathcal{V}$.
Any set function $f : 2^{\mathcal{V}} \rightarrow \mathbb{R}$ can be transformed into a pseudo-Boolean function
\hbox{$\phi : \{ 0, 1 \}^{\mathcal{V}} \rightarrow \mathbb{R}$}
defined on the Boolean lattice \hbox{$(\{ 0, 1 \}^{\mathcal{V}}, \land, \lor)$} \citep{crama:2011}, hence submodular optimization
performed on the integer lattice $\mathbb{Z}^{\mathcal{V}}$ can be seen as a natural generalization of optimization on the Boolean lattice.
There are, however, some important differences to highlight.  \\

\noindent We recall that a set function $f: 2^{\mathcal{V}} \rightarrow \mathbb{R}$ is called set \emph{submodular}
if and only if 
\begin{equation} \label{eq:set-submodularity}
f(A) + f(B) \geq f(A \cup B) + f(A \cap B) \quad \text{ for all } A, B \subseteq \mathcal{V}
\end{equation}
A function $f : \mathbb{Z}^{\mathcal{V}} \rightarrow \mathbb{R}$ is said to be \emph{integer-lattice submodular} if 
\begin{align}
    \label{eq:weak-dr-submodularity}
    f(\mathbf{x}) + f(\mathbf{y}) \geq f(\mathbf{x} \land \mathbf{y}) + f(\mathbf{x} \lor \mathbf{y})
\end{align}
for any $\mathbf{x}, \mathbf{y} \in \mathbb{Z}^{n}_{+}$.
On the other hand, $f : \mathbb{Z}^{\mathcal{V}} \rightarrow \mathbb{R}$ is called \emph{DR-submodular} if
\begin{align}
    \label{eq:strong-dr-submodularity-2}
    f(\mathbf{x} + \mathbf{1}_e) - f(\mathbf{x}) \geq f(\mathbf{y} + \mathbf{1}_e) - f(\mathbf{y}) 
\end{align}
for all $e \in \mathcal{V}$, and for all $\mathbf{x}, \mathbf{y} \in \mathbf{Z}^{n}_{+}$ such that $\mathbf{x} \preccurlyeq \mathbf{y}$.
Properties (\ref{eq:weak-dr-submodularity}) and (\ref{eq:strong-dr-submodularity-2}) are sometimes known as \emph{weak} and \emph{strong DR-submodularity}, respectively \citep{sahin:2020}. \\

\noindent In this work, we focus on \emph{monotone} submodular function maximization on the integer lattice subject to a cardinality constraint.
More precisely, consider the maximization problem:
\begin{equation}
    \label{eq:monotone-lattice-submax-under-cardinality-constraint}
    \begin{alignedat}{2}
    \max        & \quad f(\mathbf{x}) & \\
    \text{s.t.} & \quad ||\mathbf{x}||_1 \leq r & \\
                & \quad x_e \leq b_e & \quad \text{ for all } e \in \mathcal{V} \\
                & \quad \mathbf{x} \in \mathbb{Z}_{+}^{\mathcal{V}}, \mathbf{b} \in \mathbb{Z}_{+}^{\mathcal{V}}, r \in \mathbb{Z}_{+}
    \end{alignedat}
\end{equation}
where $f : \mathbb{Z}^{\mathcal{V}}_{+} \rightarrow \mathbb{R}$ is a monotone submodular function
defined on the integer lattice, $\mathbf{x} \in \mathbb{Z}^{\mathcal{V}}$ is defined such that $x_e \in \mathbb{Z}_{+}$ determines how many copies of an element $e \in \mathcal{V}$ should be selected,
$\mathbf{b} \in \mathbb{Z}^{\mathcal{V}}$ is a known vector where $b_e \in \mathbb{Z}_{+}$ represents how many copies of an element $e \in \mathcal{V}$ are available in the ground set, and $r \in \mathbb{Z}_{+}$ denotes the maximum cardinality of a feasible solution. We observe that
it should also hold that $r < ||\mathbf{b}||_1$, otherwise \cref{eq:monotone-lattice-submax-under-cardinality-constraint} becomes an unconstrained submodular maximization problem, which is trivially solved in constant time since $f$ is
monotone\footnote{In the monotone case of unconstrained submodular function maximization, selecting $\mathbf{x} \coloneqq \mathbf{b}$ guarantees that $f(\mathbf{x})$ is maximized.}. \\

\noindent We assume that $f : \mathbb{Z}^{\mathcal{V}} \rightarrow \mathbb{R}$ is given via a \emph{value oracle} black box, i.e., given some feasible $\mathbf{x} \in \mathbb{Z}^{\mathcal{V}}_{+}$, the oracle returns $f(\mathbf{x})$.
We refer to each oracle invocation as an \emph{oracle query}.
This value oracle model is standard in submodular optimization, as it abstracts away the details of the specific problem in a generic algorithm that solves a submodular optimization problem. \\
We also assume that $b_e \geq 1$ for all $e \in \mathcal{V}$
(otherwise, the elements $e' \in \mathcal{V}$ such that $b_{e'} = 0$ can be removed from the ground set $\mathcal{V}$ w.l.o.g.).
Moreover, notice that when $b_e = 1$ for all $e \in \mathcal{V}$, \cref{eq:monotone-lattice-submax-under-cardinality-constraint}
is equivalent to a set submodular function maximization problem subject to a cardinality constraint,
where we one can select a set $S \subseteq \mathcal{V}$ with at most $|S| = r$ elements. \\

\noindent Considering the wealth of algorithms for the optimization of set submodular functions, which are limited to express binary 
decisions (i.e., whether to select one element or not), it is natural to consider a \emph{reduction} from the integer lattice setting to the set submodular setting that enables the use of popular set submodular maximization procedures like the \textsc{Greedy} algorithm in the integer lattice domain.
The most natural one generates $b_e$ elements in a new set  $\mathcal{V}'$  for each element $e \in \mathcal{V}$ \cite{soma:2015}.
The drawback is that this reduction yields a pseudo-polynomial-time algorithm in $n$, which negatively affects the runtime of the set submodular algorithms as the value of $b_e$ grows for each $e$.
For DR-submodular functions, \cite{ene:2016} proposed another reduction to set-submodular optimization, 
which enacts a bit decomposition argument that yields a ground set $\mathcal{V}'$ of size $|\mathcal{V}'| = \mathcal{O}(\log \mathbf{b} + \frac{1}{\varepsilon}) \cdot n$. \\

\noindent The algorithm presented in this work is based on the \textsc{Stochastic Greedy} technique introduced by \cite{mirzasoleiman:2015}, which is -- in turn -- based on the \textsc{Greedy} algorithm \cite{nemhauser:1978}, which is perhaps the single most famous result in submodular function maximization.
Indeed, \cite{nemhauser:1978} showed that a simple greedy approach that picks
an unselected element at each iteration maximizing the local marginal gain provides a tight $(1 - \frac{1}{e})$-approximation guarantee for maximizing a \emph{monotone} submodular function subject to cardinality constraints. \\
\cite{mirzasoleiman:2015} devised an algorithm called \textsc{Stochastic Greedy} (\textsc{SG} for short), which greatly improves upon the running time of \textsc{Greedy} while retaining the same approximation ratio in expectation, using only $\mathcal{O}(n \log \frac{1}{\varepsilon})$ oracle queries.
The \textsc{SG} algorithm uses a randomized subsampling technique that, for each of the $r$ iterations, randomly selects a subset $Q \subseteq \mathcal{V}$ of a fixed size and finds the element $e \in Q$ that maximizes the marginal gain.
The key difference between this approach and \textsc{Greedy} is that $Q$ changes at each iteration. Thus, in expectation, \textsc{SG} does cover the entire dataset. In constrast, \textsc{Greedy} is equivalent to subsampling elements from $\mathcal{V}$ before looking for the most representative elements in it. 
Here, we extend the idea underlying the \textsc{Stochastic Greedy} algorithm from a set to an integer lattice domain,
proving that our algorithms are both practically and theoretically faster than \textsc{Stochastic Greedy} consider a reduction of an integer lattice problem to the set domain.

\subsection{Our Results}

Here, we present a randomized algorithm for maximizing a monotone \hbox{(DR-)submodular} function defined
on the integer lattice, improving upon the state of the art \citep{soma:2018,lai:2019} in terms of practical running time,
while obtaining a strong approximation guarantee in high probability.
We also show that our algorithms are significantly more stable than the previous state of the art in terms of the number of oracle queries required to approximately solve the target problem. \\

\noindent In particular, our contribution comprises:
\begin{itemize}
    \item a \textsc{Stochastic Greedy Lattice} (\textsc{SGL} for short) algorithm for maximizing monotone DR-submodular functions defined on the integer lattice subject to cardinality constraints;
    \item an analysis of the approximation ratio of \textsc{SGL}, which can be made arbitrarily close to  $(1 - \frac{1}{e})$, which is tight \citep{feige:1998}, with arbitrary probability greater or equal to one half;
    \item empirical experiments on a synthetic class of instances, which indicate the scalability of \textsc{SGL} and show the instability of the algorithms of \cite{soma:2018} and \cite{lai:2019}, which are considered the current state of the art for the considered constrained maximization problem.
\end{itemize}

\section{Background and Notation}

\paragraph{Notation}

We consider a finite $n$-dimensional set $\mathcal{V} \coloneqq \{ 1, 2, \dots, n \}$,
which we refer to as \emph{ground set}, and its powerset $2^{\mathcal{V}}$.
We denote by $\mathbb{R}_{+}$ the set of non-negative real numbers, and by $\mathbb{Z}_{+}$
the set of non-negative integer numbers.
We use bold face letters such as $\mathbf{x} \in \mathbb{R}^{\mathcal{V}}$ and $\mathbf{x} \in \mathbb{R}^n$ interchangeably to denote $n$-dimensional vectors. Similarly, we let $\mathbf{0}$ and $\mathbf{1}$ indicate $n$-dimensional vectors
whose values are all $0$ or $1$, respectively.
Given a vector $\mathbf{x}$, we denote its $e$-th coordinate by $x_e$ or $\mathbf{x}(e)$.
We let $\mathbf{1}_e \in \mathbb{Z}_{+}$ indicate the \emph{characteristic vector}, defined such that
$\mathbf{1}_e(e) \coloneqq 1$ and $\mathbf{1}_e(e') \coloneqq 0$ for all $e, e' \in \mathcal{V}$ such that $e \neq e'$.
We let $supp(\mathbf{x}) \coloneqq \{ e \in \mathcal{V}\ |\ x_e > 0 \}$ denote the \emph{support} of $\mathbf{x} \in \mathbb{Z}_{+}^{\mathcal{V}}$.
For two vectors $\mathbf{x}, \mathbf{y} \in \mathbb{R}^{n}_{+}$, $\mathbf{x} \preccurlyeq \mathbf{y}$ means $x_e \leq y_e$ for every element $e \in \mathcal{V}$.
Additionally, given $\mathbf{x}, \mathbf{y} \in \mathbb{Z}^{n}_{+}$ we let $\mathbf{x} \land \mathbf{y}$ and $\mathbf{x} \lor \mathbf{y}$
denote the coordinate-wise minimum and maximum, respectively,
i.e., $(\mathbf{x} \land \mathbf{y})_e \coloneqq \min \{ x_e, y_e \}$ and $(\mathbf{x} \lor \mathbf{y})_e \coloneqq \max \{ x_e, y_e \}$.
We define $||\mathbf{x}||_{1} \coloneqq (\sum_{e \in \mathcal{V}} |e_k|)$
and $||\mathbf{x}||_{\infty} \coloneqq max_{e \in \mathcal{V}} |x_e|$.

\paragraph{Definitions}

A set function $f : 2^{\mathcal{V}} \rightarrow \mathbb{R}$ is \emph{monotone} if, for every $A \subseteq B \subseteq \mathcal{V}$,
$f(A) \leq f(B)$. $f$ is said to be \emph{normalized} if $f(\varnothing) = 0$.
The marginal gain obtained by adding an element $e \in \mathcal{V}$ to a set $S \subseteq \mathcal{V}$ is defined as
$f(e\ |\ S) \coloneqq f(S \cup \{ e \}) - f(S)$. 
We now give the analogous definitions for the integer lattice case.

\noindent An integer lattice function $f : Z^{\mathcal{V}}_{+} \rightarrow \mathbb{R}$ is \emph{monotone} if $\mathbf{x} \preccurlyeq \mathbf{y}$ implies $f(\mathbf{x}) \leq f(\mathbf{y})$ for some $\mathbf{x}, \mathbf{y} \in \mathbb{Z}^{\mathcal{V}}_{+}$,
and $f$ is \emph{normalized} if $f(\mathbf{0}) = 0$.
We only consider monotone and normalized functions.
The marginal gain obtained by adding an element $e \in \mathcal{V}$ to a vector $\mathbf{x} \in Z^{\mathcal{V}}_{+}$ is defined as
$f(\mathbf{1}_e\ |\ \mathbf{x}) \coloneqq f(\mathbf{x} + \mathbf{1}_e) - f(\mathbf{x})$, where the sum operation is applied component-wise. 
 Given a known vector $\mathbf{b} \in Z^{\mathcal{V}}_{+}$ that defines the multiplicities of the elements $e \in \mathcal{V}$,
the problem of maximizing an integer lattice function $f : Z^{\mathcal{V}}_{+} \rightarrow \mathbb{R}$ under a cardinality constraint $r \in \mathbb{Z}_{+}$ is formalized as selecting a vector $\mathbf{x} \in Z^{\mathcal{V}}_{+}$ whose entries $x_e$ have value at most $b_e$ and such that $||\mathbf{x}||_1 \leq r$.

\subsection{Prior Work}

\noindent The problem of maximizing a set submodular function $f : 2^{\mathcal{V}} \rightarrow \mathbb{R}$,
i.e., finding a set $S \subseteq \mathcal{V}$ such that $f(S)$ is maximized, is NP-Hard, as it generalizes the \textsc{MaxCut} problem. 
Nemhauser \emph{et al.} first showed a discrete greedy algorithm (commonly referred to as \textsc{Greedy}) which yields a constant
$(1 - \frac{1}{e})$-approximation
to the problem of maximizing a monotone submodular function under a cardinality constraint \cite{nemhauser:1978}.
The approximation ratio of \textsc{Greedy} is tight, i.e., it is the best possible performance guarantee for the considered problem
both in the value oracle model and independently of $P \neq NP$ \cite{feige:1998}.
\cite{vondrak:2008} and \cite{calinescu:2011} pioneered continuous algorithms for constrained submodular maximization based on the \emph{multilinear extension} \cite{calinescu:2007} and the \emph{pipeage rounding} method \cite{ageev:2004, gandhi:2006}.
of $f$ \cite{calinescu:2007}. 
While continuous algorithms are interesting on their own and can model a broad class of constraints, they are generally slower than the fastest greedy counterparts \cite{mirzasoleiman:2015}.  \\

\noindent Submodular optimization on the lattice domain is almost as old as set-submodular optimization itself \cite{topkis:1978}, but efficient algorithms for maximizing integer lattice submodular functions are much more recent.   
\cite{soma:2018} proposed two deterministic $(1 - 1/e - \varepsilon)$-approximation algorithms for maximizing monotone integer lattice submodular functions subject to cardinality constraints: one for the case of DR-submodular functions, and the other for the less restrictive case of integer lattice submodular functions (Algorithm 1 and Algorithm 3 of \cite{soma:2018}, respectively). We refer to these two algorithms as \textsc{Soma-I-DR} and \textsc{Soma-II}.
\textsc{Soma-I-DR} requires $\mathcal{O}(\frac{n}{\varepsilon} \mathbf{ ||\mathbf{b}||_{\infty} } \log{ \frac{r}{\varepsilon} })$ running time, whereas \textsc{Soma-II} has the much worse time complexity of $\mathcal{O}(\frac{n}{\varepsilon^2} \log{ ||\mathbf{b}||_{\infty} } \log{ \frac{r}{\varepsilon} } \log{\tau} )$, where $\tau$ is the ratio of the maximum value of $f$ to the minimum positive increase in the value of $f$.
\cite{gottschalk:2015} presented a natural \enquote{Double Greedy} time algorithm, which is also pseudopolynomial, and with the same approximation ratio of $(1 - 1/e)$ (and which is a $1/3$-approximation when there are no cardinality constraints). 
Finally, \cite{bach:2019} extended the multilinear extension to DR-submodularity, and \cite{sahin:2020b} introduced the generalized multilinear extension of the integer-lattice optimization problems in a similar spirit, but without translating these to algorithms.



\paragraph{Reduction from Lattice to Set}

\noindent We are aware of two generic reductions from the integer lattice domain to the set submodular setting, allowing the use of constrained maximization algorithms that target set submodular functions.
The most intuitive one \cite{soma:2015} generates $b_e$ copies for each element $e \in \mathcal{V}$, letting $\mathcal{V}'$ be the the multiset containing these copies.
Thus, problem (\ref{eq:monotone-lattice-submax-under-cardinality-constraint}) can then be reformulated into a set function maximization problem with respect to the ground set $\mathcal{V}'$, and one can use the \textsc{SG} algorithm to solve the problem. The drawback is that this reduction yields pseudo-polynomial time algorithm, since $|\mathcal{V}'| = \mathbf{b} \cdot \mathbf{1}n$.
We denote the \textsc{SG} algorithm simulated in the integer lattice domain by \textsc{SSG}.
Another, more efficient reduction algorithm is proposed by \cite{ene:2016}, but it can only be applied to DR-submodular functions.
This reduction uses a bit decomposition argument that yields a ground set $\mathcal{V}'$ of size $|\mathcal{V}'| = \mathcal{O}(\log \mathbf{b} + \frac{1}{\varepsilon}) \cdot n$.

\section{Our Algorithm}

Consider the drawbacks of the known reduction algorithms to translate an integer lattice submodular function to a set submodular function, explained above,
and the unpredictable running time of \textsc{Lai-DR} \citep{lai:2019}, which requires solving an integer linear program, whose runtime could be much larger than the time needed to evaluate an oracle query.
This motivated us to seek novel algorithms for submodular maximizing of integer lattice submodular functions subject to a cardinality constraint.
In particular, we were inspired by the simplicity and performance of the \textsc{Stochastic Greedy} algorithm \citep{mirzasoleiman:2015}, which employs the random subsampling technique,
while retaining a tight approximation ratio of $(1 - 1/e)$ on average for the set submodular version of the problem.
We thus propose a novel algorithm, which is based on the same random subsampling idea, but combined with the \textsc{Decreasing Threshold Greedy} framework of \cite{badanidiyuru:2014b}
and the \emph{binary search} strategy of \cite{soma:2018}.

\noindent Let \hbox{$s \coloneqq \lfloor \frac{n}{r} \cdot \log{\frac{1}{\varepsilon}} \rfloor$} indicate the sample size, as in \cite{mirzasoleiman:2015}.
\noindent Our proposed algorithm starts with an empty solution vector $\mathbf{x} \in \mathbb{Z}_{+}^{\mathcal{V}}$
and with a decreasing threshold value $\Theta$ (which we use to decide whether to add new elements to $\mathbf{x}$)
initialized with $d$, the maximum value of the function $f$ over every singleton $\mathbf{e}$, for $e \in \mathcal{V}$. \\
Until the cardinality of $\mathbf{x}$ reaches $r$, we randomly sample $s$ elements from the available elements in $\mathcal{V}$.
Locally, we look for $s$ distinct elements, but we allow sampling with replacement among different iterations.
We then look for the maximum integer $k \in \mathbb{Z}_{+}$ such that the marginal gain $f(k \cdot \mathbf{1}_e\ |\ \mathbf{x})$
is greater or equal than $k \cdot \Theta$.
The value of $k$ indicates how many copies of an element $e \in \mathcal{V}$ are added to $\mathbf{x}$ (if any at all),
so to make sure that we don't surpass the cardinality bound $r$,
at each iteration $t$ it should hold that $k \leq \min \{ b_e - x_{e}, r - ||\mathbf{x}||_1 \}$
for any previous solution $x^{t - 1}$ and element $e \in \mathcal{V}$.
If such value $k$ exists and satisfies the constraint that $f(\mathbf{x} + k \cdot \mathbf{1}_e)$
increase the value $f$, then we add $k$ copies of element $e$ to our solution $\mathbf{x}$.
We repeat the process until the sum of the entries of $\mathbf{x}$ reaches the desired cardinality $r$. \\

\noindent See \cref{alg:stochastic-greedy-lattice-iii} for a concrete instance of this approach, which we call \textsc{Stochastic Greedy Lattice} (\textsc{SGL} for short).
The major difference between our algorithm \textsc{SGL} and \hbox{\textsc{Soma-DR-I}} \cite{soma:2018} is that
at each iteration we only consider a subset $Q \subseteq \mathcal{V}$ of at most $s$ elements, whereas \hbox{\textsc{Soma-DR-I}} applies our same binary search procedure to every element in the ground set $\mathcal{V}$, and that \textsc{Soma-DR-I} is fully deterministic.
Moreover, we require $||\mathbf{x}||_1$ to reach the desired cardinality $r$, whereas \hbox{\textsc{Soma-DR-I}} only stops when the decreasing
threshold $\Theta$ is too low. On the other hand, we do not observe similarities between \textsc{SGL} and \textsc{Lai-DR},
except the fact that both algorithms are randomized.

\begin{algorithm}[tb]
    \caption{\textbf{SGL-III} algorithm for maximizing a monotone integer lattice function subject to cardinality constraint.}
    \label{alg:stochastic-greedy-lattice-iii}
    \SetKwFunction{FMain}{SGL-III}
    \SetKwProg{Fn}{Function}{:}{}
    \SetKwInOut{Input}{Input}
    \SetKwInOut{Output}{Output}
    \SetKw{Break}{break}
    \SetKw{And}{and}
    \Input{Monotone integer lattice function $f : \mathbb{Z}^{\mathcal{V}}_{+} \rightarrow \mathbb{R}$ (with $n \coloneqq |V|$), vector of quantities $\mathbf{b} \in \mathcal{Z}_{+}^{\mathcal{V}}$, and cardinality constraint $r \in \mathbb{Z}_{+}$.}
    \Output{Vector $\mathbf{x} \in \mathbb{Z}^{\mathcal{V}}_{+}$ that approximately solves $\max_{\mathbf{x}} f(\mathbf{x})$ s.t. $||x||_1 \leq r$.} 
    $\mathbf{x}^{0} \gets \mathbf{0};$ $d \gets \max_{e \in \mathcal{V}} f(\mathbf{1}_e)$\;
    $\Theta^{0} \gets d;$ $\Theta_{stop} \gets \frac{\varepsilon}{r} \cdot d$; $t \gets 1$\;
    \While{$||\mathbf{x}^{t - 1}||_1 < r$}{
        $Q \gets $ sample $s$ distinct elements from $\{ e \in {\mathcal{V}}\ |\ x_e < b_e \}$\;
        \For{$e \in Q$}{
            $k^{t}_{max} \gets \min \{ b_e - x^{t - 1}_{e}, r - ||\mathbf{x}^{t - 1}||_1 \}$\;
            Find the max $k^{t} \in \mathbb{Z}_{+} $ with $1 \leq k^{t} \leq k^{t}_{max}$
            such that $f(k^{t} \cdot \mathbf{1}_e\ |\ \mathbf{x}) \geq k^{t} \cdot \Theta^{t - 1}$ and
            $f(\mathbf{x}^{t - 1} + k^{t} \cdot \mathbf{1}_e) \geq f(\mathbf{x}^{t - 1})$ via binary search\;
            \If{$\exists\ k^{t}$ as above \And $f(\mathbf{x}^{t - 1} + k^{t} \cdot \mathbf{1}_e) \geq f(\mathbf{x}^{t - 1})$}{
                $\mathbf{x}^{t} \gets \mathbf{x}^{t - 1} + k^{t} \cdot \mathbf{1}_e$\;
            }
        }
        $\Theta^{t} \gets \max \{ \Theta^{t - 1} \cdot (1 - \varepsilon \cdot), \Theta_{stop} \}$; $t \gets t + 1$\;
    }
    \KwRet $\mathbf{x}^{t - 1}$\;
\end{algorithm}

\section{An Analysis}

\noindent We now develop guarantees for the proposed algorithm (\textsc{SGL}), both in terms of its approximation ratio and computational complexity expressed in terms of oracle queries.

\subsection{Probabilistic Approximation Guarantees}

We define the following set $A_t$ corresponding to the random subsampling procedure of $s$ elements at iteration $t$.
\begin{equation}\label{eq:defai}
A_t \coloneqq \text{supp}(\max\{\mathbf{x}^*-\mathbf{x}_t,0\}).
\end{equation}
\noindent We also define
\begin{equation}\label{eq:deftbound}
  \bar{t} \coloneqq \frac{\log\left[1-\exp\{-\frac{\log 2}{n}\}\right]}{\log\left(1-\frac{s}{n}\right)},
\end{equation}
and we prove the following:

\begin{lemma}\label{lem:somagain}
Consider $\mathbf{x}^{t}$ the current solution at iteration $t$,
and let $k$ be quantity corresponding to a selected element $e \in \mathcal{V}$ that should be added to $\mathbf{x}^{t}$ in \cref{alg:stochastic-greedy-lattice-iii}.
Then, with probability $p > \frac{1}{2}$, it holds that
\begin{equation}\label{eq:somagain}
\frac{f(k^t \cdot \mathbf{1}_e\ |\ \mathbf{x}^{t})}{k^t} \geq \frac{(1 - \bar t\epsilon)}{r} \sum_{v\in A_t \setminus \{e\}} f(\mathbf{1}_v\ |\ \mathbf{x}^t) .
\end{equation}
\end{lemma}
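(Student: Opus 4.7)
The plan is to combine a random-sampling argument, the decreasing-threshold rule of \cref{alg:stochastic-greedy-lattice-iii}, and DR-submodularity, in three main steps.

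\textbf{Step 1} (witness and sampling). Fix $v^\star \in A_t \setminus \{e\}$ maximizing $f(\mathbf{1}_v\ |\ \mathbf{x}^t)$ over $A_t \setminus \{e\}$ (if $A_t \setminus \{e\} = \emptyset$ the claim is trivial). Because $v^\star \in A_t$, we have $x^t_{v^\star} < x^*_{v^\star} \leq b_{v^\star}$, so $v^\star$ is eligible for sampling at every iteration $t' \leq t$. In a single iteration, $Q$ is drawn uniformly from a pool of at most $n$ elements, hence $\Pr[v^\star \notin Q] \leq 1 - s/n$. Over $\bar{t}$ consecutive iterations the probability that $v^\star$ is never sampled is at most $(1 - s/n)^{\bar{t}} = 1 - \exp(-\log 2 / n)$ by the definition of $\bar{t}$ in~\cref{eq:deftbound}, which is strictly less than $1/2$ for $n \geq 2$. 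So with probability $p > 1/2$ there is some $t' \in \{t - \bar{t}, \dots, t - 1\}$ with $v^\star \in Q^{t'}$.

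\textbf{Step 2} (threshold lower bound). Condition on $v^\star \in Q^{t'}$ and split into two subcases. (a) If $v^\star$ was not added at iteration $t'$, the $k = 1$ branch of the binary search failed, giving $f(\mathbf{1}_{v^\star}\ |\ \mathbf{x}^{t'}) < \Theta^{t'}$. (b) If $v^\star$ was added with maximal multiplicity $k^{t'}$, then since $v^\star \in A_t$ one has $x^{t'+1}_{v^\star} \leq x^t_{v^\star} < b_{v^\star}$ and $\|\mathbf{x}^{t'+1}\|_1 < r$, so the cap on $k^{t'}$ came from the threshold rather than feasibility; this yields $f((k^{t'}+1) \mathbf{1}_{v^\star}\ |\ \mathbf{x}^{t'})/(k^{t'}+1) < \Theta^{t'}$. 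Expanding the marginal and using $f(k^{t'} \mathbf{1}_{v^\star}\ |\ \mathbf{x}^{t'}) \geq k^{t'} \Theta^{t'}$ gives $f(\mathbf{1}_{v^\star}\ |\ \mathbf{x}^{t'+1}) < \Theta^{t'}$. In either subcase, DR-submodularity together with $\mathbf{x}^{t'} \preccurlyeq \mathbf{x}^t$ (resp.\ $\mathbf{x}^{t'+1} \preccurlyeq \mathbf{x}^t$) yields $f(\mathbf{1}_{v^\star}\ |\ \mathbf{x}^t) < \Theta^{t'}$. Finally, $\Theta^{t'} = \Theta^t (1 - \epsilon)^{-(t - t')} \leq \Theta^t (1 - \epsilon)^{-\bar{t}}$ combined with Bernoulli's inequality $(1 - \epsilon)^{\bar{t}} \geq 1 - \bar{t}\epsilon$ gives $\Theta^t > (1 - \bar{t}\epsilon) f(\mathbf{1}_{v^\star}\ |\ \mathbf{x}^t)$.

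\textbf{Step 3} (averaging and assembly). By construction of the algorithm, $f(k^t \mathbf{1}_e\ |\ \mathbf{x}^t)/k^t \geq \Theta^t$. Since each element in the support of $\max\{\mathbf{x}^* - \mathbf{x}^t, 0\}$ contributes at least one unit to its integer $1$-norm, $|A_t \setminus \{e\}| \leq |A_t| \leq \|\mathbf{x}^*\|_1 \leq r$. Combining this with Step 2 and averaging,
$$\frac{f(k^t \mathbf{1}_e\ |\ \mathbf{x}^t)}{k^t} \geq \Theta^t > (1 - \bar{t}\epsilon)\, f(\mathbf{1}_{v^\star}\ |\ \mathbf{x}^t) \geq \frac{1 - \bar{t}\epsilon}{r} \sum_{v \in A_t \setminus \{e\}} f(\mathbf{1}_v\ |\ \mathbf{x}^t),$$
which is the claimed inequality.

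\textbf{Main obstacle.} The hardest step is Step 2(b): when $v^\star$ was selected at some past iteration yet still lies in $A_t$, one must confirm that the multiplicity cap was imposed by the threshold rather than by $b_{v^\star}$ or the cardinality budget, and then transfer the resulting marginal-gain inequality up to $\mathbf{x}^t$ through DR-submodularity. A secondary subtlety is that sampling is without replacement within an iteration but only effectively independent across iterations; one must argue that, as the eligible pool $\{e : x_e < b_e\}$ shrinks, the per-iteration miss probability remains bounded by $1 - s/n$, which is what Step 1 uses.
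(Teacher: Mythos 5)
Your proof is correct and follows the same overall strategy as the paper's (relate $f(\mathbf{1}_v\,|\,\mathbf{x}^t)$ to the threshold at the last iteration where $v$ was, or could have been, processed; push that bound up to $\mathbf{x}^t$ by DR-submodularity; and convert the $\bar t$-step geometric threshold decay into the $(1-\bar t\varepsilon)$ factor via Bernoulli), but the probabilistic core is genuinely different. The paper asks that \emph{every} element have been sampled within the last $\bar t$ iterations, which is why its success probability takes the form $\left(1-(1-s/n)^{\bar t}\right)^n$, calibrated by the definition of $\bar t$ in \cref{eq:deftbound} to sit at $1/2$, and it then combines the per-element inequalities over $A_t\setminus\{e\}$ by averaging. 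You instead track only the single witness $v^\star$ maximizing the marginal gain and close the sum with $|A_t\setminus\{e\}|\le \|\mathbf{x}^*\|_1\le r$; this requires only $(1-s/n)^{\bar t}=1-2^{-1/n}<\tfrac12$, so your success probability is at least $2^{-1/n}$, which is substantially better than $\tfrac12$ and avoids the implicit independence-over-all-elements step in the paper's $(\cdot)^n$ computation. Your case analysis is also more explicit than the paper's single contradiction display: you separate \enquote{sampled but not added} (failure of the $k=1$ threshold test) from \enquote{added with threshold-capped multiplicity}, and you correctly rule out the feasibility caps $b_{v^\star}-x_{v^\star}$ and $r-\|\mathbf{x}\|_1$ using $v^\star\in A_t$ and the fact that the algorithm is still adding at iteration $t$. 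One small patch is needed: when $t\le\bar t$ and $v^\star$ was never sampled at all, your window $\{t-\bar t,\dots,t-1\}$ is partly vacuous; this case is covered deterministically by the initialization, since $\Theta^{t}\ge(1-\varepsilon)^{t}\,d\ge(1-\varepsilon)^{\bar t}f(\mathbf{1}_{v^\star})\ge(1-\varepsilon)^{\bar t}f(\mathbf{1}_{v^\star}\,|\,\mathbf{x}^t)$, which is exactly the role of the paper's remark that the bound holds when $\Theta=d$.
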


\begin{proof}
    We can see that \cref{eq:somagain} holds for $\Theta = d$ by the same arguments as ~\cite[Lemma 3]{soma:2018}, which require DR-submodularity.
    Let us suppose that it also holds for iteration $t > 0$.
    Indeed, given $v\in A_t\setminus \{e\}$, if $v$ was picked from the previous sample $Q_{t - 1}$ in quantity $k'$, we would have that
    $f(\mathbf{1}_v| \mathbf{x}^t) \leq \frac{\Theta}{1-\varepsilon}$.
    This is by contradiction, since if it was otherwise,
    \begin{align*}
        f((k'+1) \mathbf{1}_v\ |\ \mathbf{x}^{t-1}) \geq f(\mathbf{1}_v\ |\ \mathbf{x}^{t}) +
          f(k' \mathbf{1}_v\ |\ \mathbf{x}^{t-1}) > \frac{\Theta}{1-\varepsilon}+ \frac{k'\Theta}{1-\varepsilon}
    \end{align*}
    would hold, contradicting $k'$ being the largest feasible integer found at the previous iteration $t - 1$.
    Thus, for such a $v$, it holds that $\frac{f(k\mathbf{1}_e\ |\ \mathbf{x}^t)}{k} \geq (1 - \varepsilon) f(\mathbf{1}_v\ |\ \mathbf{x}^t)$. \\
    \noindent On the other hand, if $v \notin Q_{t - 1}$, then
    \begin{align*}
        \frac{f(k \mathbf{1}_e\ |\ \mathbf{x}^t)}{k} \geq (1 - \varepsilon)^{\hat{t(v,t)}} f(\mathbf{1}_v\ |\ \mathbf{x}^t)
          \geq (1 - \hat t(v,t) \varepsilon) f(\mathbf{1}_v\ |\ \mathbf{x}^t),
    \end{align*}
    where $\hat{t}(v, t)$ denotes the number of iterations which occurred before $t$ since the last time the element $v$ was selected.
    With a slight abuse of notation, we fix $v$ and $t$ and define $\hat{t} \coloneqq \hat{t}(v, t)$.
    Clearly, $\hat{t}$ is a geometric random variable with \emph{success probability} $\hat{p} \coloneqq \frac{s}{n}$
    and with \emph{cumulative distribution function} $1-(1-\hat{p})^{\hat{t}}$.
    Thus, for every element $v \in \mathcal{V}$ at most $\hat{t}$ iterations old,
    this would have a probability of $\left(1 - (1 - \hat{p})^{\hat{t}} \right)^{n}$. 
    Now,
    \begin{align*}
        \left(1 - (1 - \hat{p})^{\hat{t}} \right)^{n} > \frac{1}{2} \quad \Longleftrightarrow
          \quad \hat{t} < \bar{t} = \frac{\log\left[1-\exp\{-\frac{\log 2}{n}\}\right]}{\log\left(1-\frac{s}{n}\right)}
    \end{align*}
    Hence, with probability $p = \frac{1}{2}$ and for all iterations $t$ and elements $v \in \mathcal{V}$, it holds that
    \begin{align*}
          \frac{f(k^t \mathbf{1}_e\ |\ \mathbf{x}^t)}{k^t} \geq (1 - \varepsilon)^{\bar{t}} f(\mathbf{1}_v\ |\ \mathbf{x}^t) \geq (1 - \bar{t} \varepsilon) f(\mathbf{1}_v\ |\ \mathbf{x}^t).
    \end{align*}
    If we average over the elements, we obtain that
    \begin{align*}
        \frac{f(k^t \mathbf{1}_e\ |\ \mathbf{x}^t)}{k^t} \geq \frac{(1 - \bar{t} \varepsilon)}{r}
          \sum_{v\in A_t \setminus \{e\}} f(\mathbf{1}_v\ |\ \mathbf{x}^t ).
    \end{align*}
\end{proof}

\noindent \cref{lem:somagain} allows us to prove the following approximation bound. 
\begin{theorem}\label{th:approx-SGL-III}
\cref{alg:stochastic-greedy-lattice-iii} achieves an approximation ratio of $(1 - \frac{1}{e} - \bar{t} \varepsilon)$
with probability $p > \frac{1}{2}$.
\end{theorem}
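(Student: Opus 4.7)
The plan is a classical Nemhauser--Wolsey greedy-style analysis adapted to the integer lattice via DR-submodularity, with the probabilistic per-iteration estimate of Lemma~\ref{lem:somagain} inserted at the key step. The first move is to condition on the event of probability $p > 1/2$ under which Lemma~\ref{lem:somagain} holds uniformly across all iterations $t$ and all elements $v$; the rest of the argument is then entirely deterministic and the approximation guarantee inherits the same probability.

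Next, I would use DR-submodularity and monotonicity to upper-bound the residual $f(\mathbf{x}^*) - f(\mathbf{x}^t)$ by single-coordinate marginals at $\mathbf{x}^t$. Since $f$ is monotone, $f(\mathbf{x}^*) \leq f(\mathbf{x}^* \lor \mathbf{x}^t)$, and by unrolling the DR-submodularity inequality coordinate by coordinate one obtains
\begin{equation*}
f(\mathbf{x}^*) - f(\mathbf{x}^t) \leq \sum_{v \in A_t} (x^*_v - x^t_v)\, f(\mathbf{1}_v \mid \mathbf{x}^t) \leq r \cdot \max_{v \in A_t} f(\mathbf{1}_v \mid \mathbf{x}^t),
\end{equation*}
where the second inequality uses $\sum_{v \in A_t}(x^*_v - x^t_v) \leq \|\mathbf{x}^*\|_1 \leq r$. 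Combining this with the per-element inequality implicit in the proof of Lemma~\ref{lem:somagain}, namely $f(k^t \mathbf{1}_e \mid \mathbf{x}^t)/k^t \geq (1 - \bar t \varepsilon)\, f(\mathbf{1}_v \mid \mathbf{x}^t)$ for all $v \in A_t$, yields the per-iteration recursion
\begin{equation*}
f(\mathbf{x}^{t+1}) - f(\mathbf{x}^t) \geq \frac{k^t (1 - \bar t \varepsilon)}{r}\bigl(f(\mathbf{x}^*) - f(\mathbf{x}^t)\bigr).
\end{equation*}

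Writing $g_t \coloneqq f(\mathbf{x}^*) - f(\mathbf{x}^t)$, this recursion becomes $g_{t+1} \leq \bigl(1 - k^t(1-\bar t \varepsilon)/r\bigr) g_t$. Telescoping across all picks, using $1 - x \leq e^{-x}$ and the termination identity $\sum_t k^t = r$, gives $g_T \leq \exp(-(1 - \bar t \varepsilon))\, f(\mathbf{x}^*)$, so that $f(\mathbf{x}^T) \geq \bigl(1 - e^{-(1 - \bar t \varepsilon)}\bigr) f(\mathbf{x}^*)$. The stated bound then follows from the elementary inequality $e^x \leq 1 + ex$ on $[0,1]$, which rearranges to $1 - e^{-(1 - \bar t \varepsilon)} \geq (1 - 1/e) - \bar t \varepsilon$.

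The main obstacle I anticipate is the mismatch between the \emph{averaged} statement of Lemma~\ref{lem:somagain} and the \emph{per-element} inequality actually required above: taken literally, the averaged form only produces a $1/r^2$-type recursion, which is too weak to yield the $(1-1/e)$ bound. Recovering the per-element inequality requires unpacking the lemma's proof and carefully handling the case $e \in A_t$, where the term excluded in $A_t \setminus \{e\}$ is controlled by the gain already contributed by the current pick and therefore does not spoil the telescoping. A secondary subtlety is to confirm that the algorithm actually terminates with $\|\mathbf{x}\|_1 = r$, which follows from monotonicity together with the decay of $\Theta^t$ down to $\Theta_{\text{stop}}$.
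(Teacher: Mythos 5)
Your proposal is correct and follows essentially the same route as the paper: a per-iteration recursion $f(\mathbf{x}^{t+1})-f(\mathbf{x}^t)\geq \frac{k^t(1-\bar t\varepsilon)}{r}\bigl(f(\mathbf{x}^*)-f(\mathbf{x}^t)\bigr)$ obtained from Lemma~\ref{lem:somagain}, then telescoping with $1-x\leq e^{-x}$ and $\sum_t k^t = r$, and the final elementary bound $1-e^{-(1-\bar t\varepsilon)}\geq 1-\frac{1}{e}-\bar t\varepsilon$. Your explicit handling of the averaged-versus-per-element form of the lemma (via the DR decomposition of the residual and the per-element inequality inside the lemma's proof) is simply a more careful rendering of the step the paper takes implicitly when it passes from the lemma to the bound in terms of $f(\mathbf{x}^*-\mathbf{x}^t\,|\,\mathbf{x}^t)$.
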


\begin{proof}
By \cref{lem:somagain}, with probability $p > \frac{1}{2}$ we have that
\begin{align*}
    \frac{f(k^t \mathbf{1}_e\ |\ \mathbf{x}^t)}{k^t} \geq \frac{(1 - \bar{t} \varepsilon)}{r} f(\mathbf{x}^{*} - \mathbf{x}^{t}\ |\ \mathbf{x}^t),
\end{align*}
where $\mathbf{x}^t$ is the solution of \cref{alg:stochastic-greedy-lattice-iii} at iteration $t$, and $\mathbf{x}^{*}$ is the optimal solution.
Given $k_{sum}$ the sum of all $k$ selected for a fixed element $e \in \mathcal{V}$ at iteration $t$,
it follows that
\begin{align*}
    f(\mathbf{x}^{t + 1}) - f(\mathbf{x}^t) \geq \frac{(1 - \bar{t} \varepsilon) k_{sum}}{r} f(\mathbf{x}^*) - f(\mathbf{x}^t).
\end{align*}
Let $\mathbf{x}^{t_f}$ be the result of \cref{alg:stochastic-greedy-lattice-iii}.
By induction,
\begin{align*}
    \begin{array}{l}
    f(\mathbf{x}^{t_f}) \geq \left(1 - \prod_t \left(1-\frac{(1 - \bar{t} \varepsilon) k_{sum}}{r} \right)\right) f(\mathbf{x}^{*})  \\
    \qquad \geq \left(1 - \prod_t \exp(-\frac{(1 - \bar{t}\varepsilon)k_{sum}}{r})\right) f(\mathbf{x}^{*})
    = \left(1 - \exp(-\frac{(1 - \bar{t}\varepsilon)\sum_t k_{sum}}{r})\right) f(\mathbf{x}^{*})\\
    \qquad \geq (1 - \frac{1}{e} - \bar{t} \varepsilon) f(\mathbf{x}^{*}).
    \end{array}
\end{align*}
\end{proof}

\vskip 6mm
Notice that the probability $p$ in Lemma \ref{lem:somagain} can be adjusted by probability amplification \cite[Section 6.8]{motwani1995randomized}. In effect, Algorithm \textsc{SGL-III} is likely to achieve an arbitrarily close performance in terms of the approximation ratio as the state of the art~\cite{soma:2018}, but with a lower overall computational cost.

\subsection{Bounding the Number of Oracle Calls}

\begin{proposition}\label{lem:max-k-max}
    The largest possible value of $k^{t}_{max}$ is $\min \{ ||\mathbf{b}||_{\infty}, r \}$.
\end{proposition}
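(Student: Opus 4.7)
The claim is a simple upper bound on the quantity $k^{t}_{max} = \min\{b_e - x^{t-1}_e,\ r - \|\mathbf{x}^{t-1}\|_1\}$ computed inside the inner loop of \cref{alg:stochastic-greedy-lattice-iii}, together with the assertion that this upper bound is achieved. My plan is therefore to prove two inequalities: (i) $k^{t}_{max} \le \min\{\|\mathbf{b}\|_\infty, r\}$ holds for every iteration $t$ and every sampled element $e$, and (ii) there is a configuration of the algorithm in which equality occurs.

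For (i), I would argue coordinate by coordinate. Since $\mathbf{x}^{t-1} \in \mathbb{Z}_+^{\mathcal{V}}$, we have $x^{t-1}_e \ge 0$ and hence $b_e - x^{t-1}_e \le b_e \le \|\mathbf{b}\|_\infty$; similarly $\|\mathbf{x}^{t-1}\|_1 \ge 0$ gives $r - \|\mathbf{x}^{t-1}\|_1 \le r$. Taking the minimum of the two sides yields
\begin{equation*}
k^{t}_{max} \;\le\; \min\{b_e,\ r\} \;\le\; \min\{\|\mathbf{b}\|_\infty,\ r\},
\end{equation*}
which is the desired bound. This part is entirely mechanical.

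For (ii), I would exhibit the first iteration $t=1$. At that point $\mathbf{x}^{0} = \mathbf{0}$, so $b_e - x^{0}_e = b_e$ and $r - \|\mathbf{x}^{0}\|_1 = r$, giving $k^{1}_{max} = \min\{b_e, r\}$. Choosing any $e^\star \in \arg\max_{e \in \mathcal{V}} b_e$ (which will be in the candidate set $\{e : x_e < b_e\} = \mathcal{V}$ since we assumed $b_e \ge 1$) and noting that this $e^\star$ will appear in the sample $Q$ with positive probability when $s \ge 1$, we get $k^{1}_{max} = \min\{\|\mathbf{b}\|_\infty, r\}$, matching the upper bound.

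No step is really an obstacle here; the only subtlety worth flagging is that the claim is about the \emph{largest possible} value rather than a worst‑case bound, so the proof must include the attainment direction and not just the inequality. Combining (i) and (ii) completes the proof.
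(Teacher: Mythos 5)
Your proof is correct and follows essentially the same reasoning as the paper: $\mathbf{x}^{t-1}\succcurlyeq\mathbf{0}$ and $\|\mathbf{x}\|_1$ non-decreasing imply the bound, with equality only possible at the start before any element is selected. You are in fact slightly more careful than the paper, which implicitly takes $b_e = \|\mathbf{b}\|_\infty$ in the attainment step without singling out $e^\star \in \arg\max_e b_e$ as you do.
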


\begin{proof}
    In \cref{alg:stochastic-greedy-lattice-iii}, $k^{t}_{max}$ is computed as $\min \{ b_e - x^{t - 1}_{e}, r - ||\mathbf{x}^{t - 1}||_1 \}$
    for a given $e \in Q$.
    We recall that $\mathbf{b}$ and $r$ are two given constants, and that $||\mathbf{x}||_1$ is non-decreasing over time (starting from $\mathbf{x} = \mathbf{0}$). Thus, in the worst case, $k^{t}_{max} = \min \{ b_e - 0, r - 0 \} = \min \{ ||\mathbf{b}||_{\infty}, r \}$,
    which can only happen before the first element is selected to be part of the solution.
\end{proof}

\begin{proposition}\label{prop:binary-search-time}
    The binary search procedure requires \hbox{$\lceil \log_{2} (k^{t}_{max} + 1) \rceil$} oracle queries for a given $k^{t}_{max}$.
    The number of oracle queries performed by one iteration of \cref{alg:stochastic-greedy-lattice-iii} is $\mathcal{O}(\log(\min \{ ||\mathbf{b}||_{\infty}, r \}))$.
\end{proposition}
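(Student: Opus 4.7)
The plan is to establish both claims by direct counting, leaning on \Cref{lem:max-k-max} for the second part and on DR-submodularity to justify correctness of binary search for the first.

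For the first claim, I would model the binary search as maintaining an interval of candidate integers $[\ell, h] \subseteq \{1, 2, \ldots, k^{t}_{max}\}$, initialized with $\ell = 1$ and $h = k^{t}_{max}$. The size of the interval is at most $k^{t}_{max}$ (equivalently, $k^{t}_{max} + 1$ positions when we include a ``no feasible $k$'' sentinel, which yields the ceiling). At each step, we query $f(\mathbf{x}^{t-1} + m \mathbf{1}_e)$ at the midpoint $m$; caching $f(\mathbf{x}^{t-1})$ across the whole inner loop makes the marginal $f(m \mathbf{1}_e \mid \mathbf{x}^{t-1})$ immediate, so one oracle call per midpoint suffices to evaluate both the threshold predicate $f(m \mathbf{1}_e \mid \mathbf{x}^{t-1}) \geq m \Theta^{t-1}$ and the monotonicity predicate $f(\mathbf{x}^{t-1} + m \mathbf{1}_e) \geq f(\mathbf{x}^{t-1})$. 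Because each step halves the interval, the count is exactly $\lceil \log_2(k^{t}_{max} + 1) \rceil$.

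For the second claim, I would simply chain the first claim with \Cref{lem:max-k-max}: since $k^{t}_{max} \leq \min\{||\mathbf{b}||_{\infty}, r\}$ uniformly in $t$, every binary search inside an iteration costs at most $\lceil \log_2(\min\{||\mathbf{b}||_{\infty}, r\} + 1) \rceil = \mathcal{O}(\log(\min\{||\mathbf{b}||_{\infty}, r\}))$ oracle queries, which is the stated bound.

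The one subtlety worth spelling out, and what I would treat as the main obstacle, is correctness of the bisection: binary search is only guaranteed to return the \emph{largest} feasible $k$ if the predicate being tested is monotone in $k$. This is precisely where DR-submodularity is used. By the diminishing-returns property (\ref{eq:strong-dr-submodularity-2}), the marginals $f((j{+}1)\mathbf{1}_e \mid \mathbf{x}^{t-1}) - f(j \mathbf{1}_e \mid \mathbf{x}^{t-1})$ are non-increasing in $j$, so the average $f(k \mathbf{1}_e \mid \mathbf{x}^{t-1})/k$ is non-increasing in $k$ and the threshold inequality $f(k \mathbf{1}_e \mid \mathbf{x}^{t-1}) \geq k \Theta^{t-1}$ defines a prefix of $\{1,\ldots,k^{t}_{max}\}$; the monotonicity predicate $f(\mathbf{x}^{t-1} + k \mathbf{1}_e) \geq f(\mathbf{x}^{t-1})$ likewise defines a prefix because $f$ itself is monotone. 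Intersecting two prefixes is again a prefix, so bisection correctly identifies its right endpoint with one oracle query per halving, completing the argument.
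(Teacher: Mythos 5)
Your proposal is correct and follows essentially the same route as the paper: count the binary-search queries as $\lceil \log_2(k^{t}_{max}+1)\rceil$ and then bound $k^{t}_{max}$ by $\min\{||\mathbf{b}||_{\infty}, r\}$ via \cref{lem:max-k-max}. The extra material you include — verifying that the threshold and monotonicity predicates are prefix-closed (via DR-submodularity and monotonicity of $f$) so that bisection genuinely returns the largest feasible $k$ — is a correctness point the paper leaves implicit, and is a welcome addition rather than a deviation.
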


\begin{proof}
    At each iteration, we consider at most \hbox{$s \coloneqq \lfloor \frac{n}{r} \cdot \log{\frac{1}{\varepsilon}} \rfloor$} distinct elements.
    From \cref{lem:max-k-max}, we have at most $\mathcal{O}(\log(\min \{ ||\mathbf{b}||_{\infty}, r \}))$ oracle queries per iteration.
\end{proof}

\noindent The number of iterations is a function of a number of parameters, including 
the function $f$ and its $n \coloneqq |V|$ and the resulting $\Theta^{0} = d$, as well as $\mathbf{b}$, $r$, $\varepsilon$, and the associated $\Theta_{stop} = \frac{\varepsilon}{r} \cdot d$. 
We present details of the runtime of several variants of the algorithm in an extended version on-line. 
Let us illustrate the behaviour computationally.

\section{Numerical Results}

We have implemented our algorithm \textsc{SGL}, as well as thealgorithms of:
\begin{itemize}
    \item \textsc{Soma-DR-I} by Soma and Yoshida \cite{soma:2018};
    \item \textsc{Lai-DR} by Lai \cite{lai:2019};
    \item \textsc{SSG}, i.e., the Simulated \textsc{Stochastic Greedy} algorithm due to Mirzasoleiman \emph{et al.} \cite{mirzasoleiman:2015} lifted from the set to the integer lattice domain.
\end{itemize}
in an open-source package available on-line\footnote{\url{https://github.com/jkomyno/lattice-submodular-maximization}}. As far as we know, the code we release is the first open-source package for submodular function maximization on the integer lattice subject to cardinality constraints.

We have validated our proposed algorithm on some synthetic problem instances of increasing size $n$, cardinality constraint $r$, and for various uniformly random configurations of the vector of available quantities $\mathbf{b}$. We measure the performance of the algorithms both in terms of the number of oracle queries and in terms of the total running time.

\paragraph{Synthetic Monotone Function}

\noindent We define a simple monotone DR-submodular function $f : \mathbb{Z}^{\mathcal{V}}_{+} \rightarrow \mathbb{R}_{+}$ where $f(\mathbf{x}) \coloneqq \mathbf{w}^{\top} \cdot \mathbf{x}$,
wherein $w \in \mathbb{Z}^{\mathcal{V}}_{+}$ is a vector sorted in ascending order whose components $w_e$
are such that $1 \leq w_e \leq 100$, for every $e \in \mathcal{V}$. One can observe that $f$ is normalized.
We consider the following parameter settings:

\begin{itemize}
    \item $n \in \{ 100, 200, 500, 750 \}$;
    \item $r \in \{ 0.25 \cdot n, 0.5 \cdot n, 1 \cdot n, 2 \cdot n \}$;
    \item Six equidistant $b \in \mathbb{Z}_{+} \cap \left[ \lfloor \frac{r}{20} \rfloor, \lfloor \frac{r}{2} \rfloor \right]$
\end{itemize}
$\mathbf{b}$ is uniformly sampled in range $\{ b, b \cdot 4 \}$.
We discard every unfeasible combination of $(n, \mathbf{b}, r)$ such that $r > ||\mathbf{b}||_{\infty}$ and $r \geq n \cdot ||\mathbf{b}||_{\infty}$. \\

\paragraph{Experimental Setting}
\noindent We remark that every algorithm under consideration uses randomization, except \textsc{Soma-DR-I}.
To reduce the bias of our benchmark, for each parameter setting, we repeated every experiment on a randomized algorithm 5 times. Moreover, we seeded our random generator to ensure that
every algorithm consumes exactly the same problem instances, even though those are generated randomly \enquote{on the fly}. We fixed the timeout for the experiments at 6 hours and 30 minutes.
While \textsc{SGL} and \textsc{SSG} managed to complete their computations in time, both \textsc{Soma-DR-I} and \textsc{Lai-DR} (currently considered the state of the art for our target problem) timed out on the most challenging parameter settings, with $n \geq 500$.
The code for the experiments is written in \textsc{Python 3.8}.
For all algorithms that require an error treshols $\varepsilon > 0$ in input (our included),
we fix $\varepsilon \coloneqq \frac{1}{4n}$, where $n$ is the size of the ground set. \\






\begin{figure}[!tbh]
  \centering
  \includegraphics[scale=0.3]{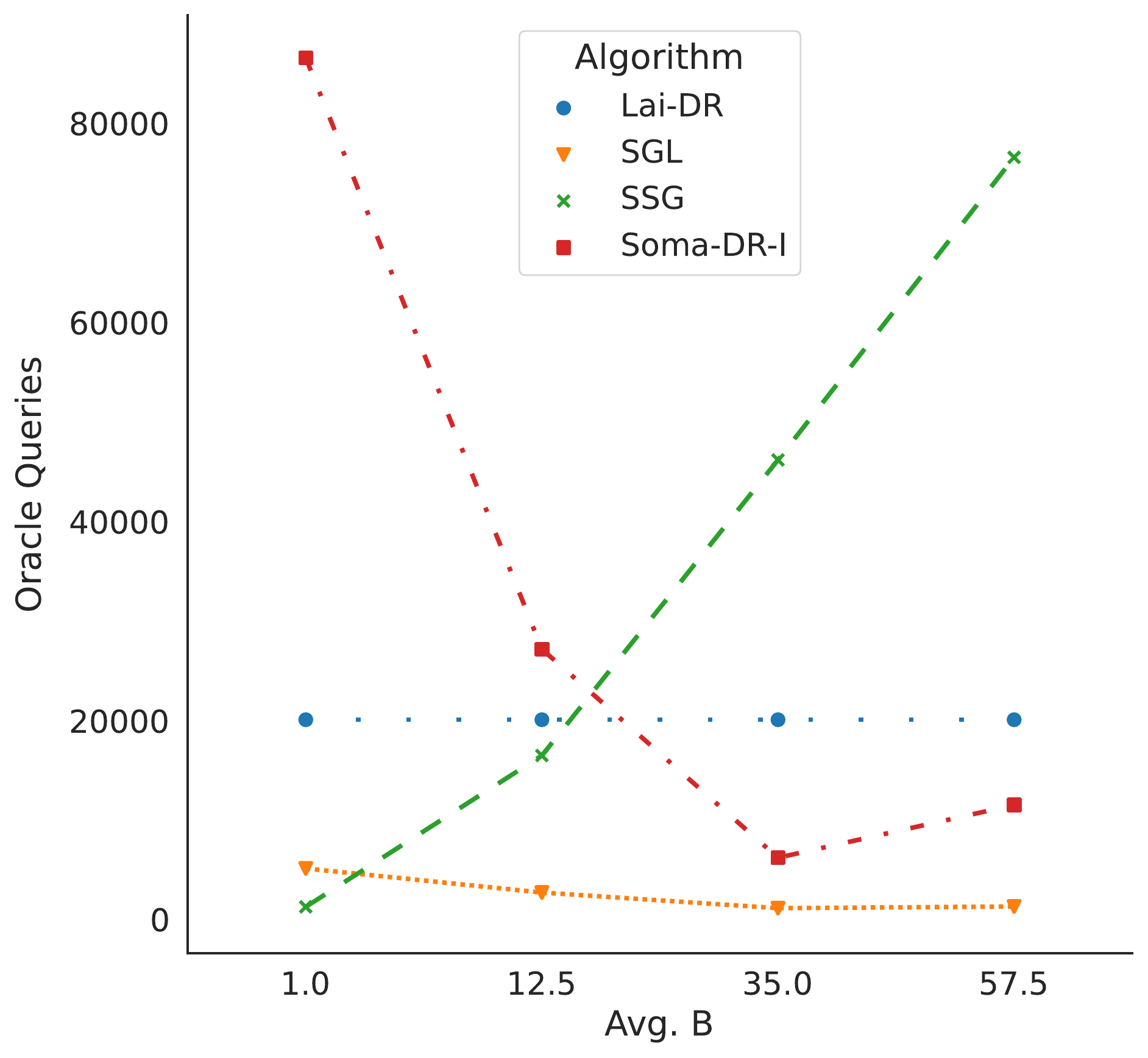}
  \centering
  \includegraphics[scale=0.3]{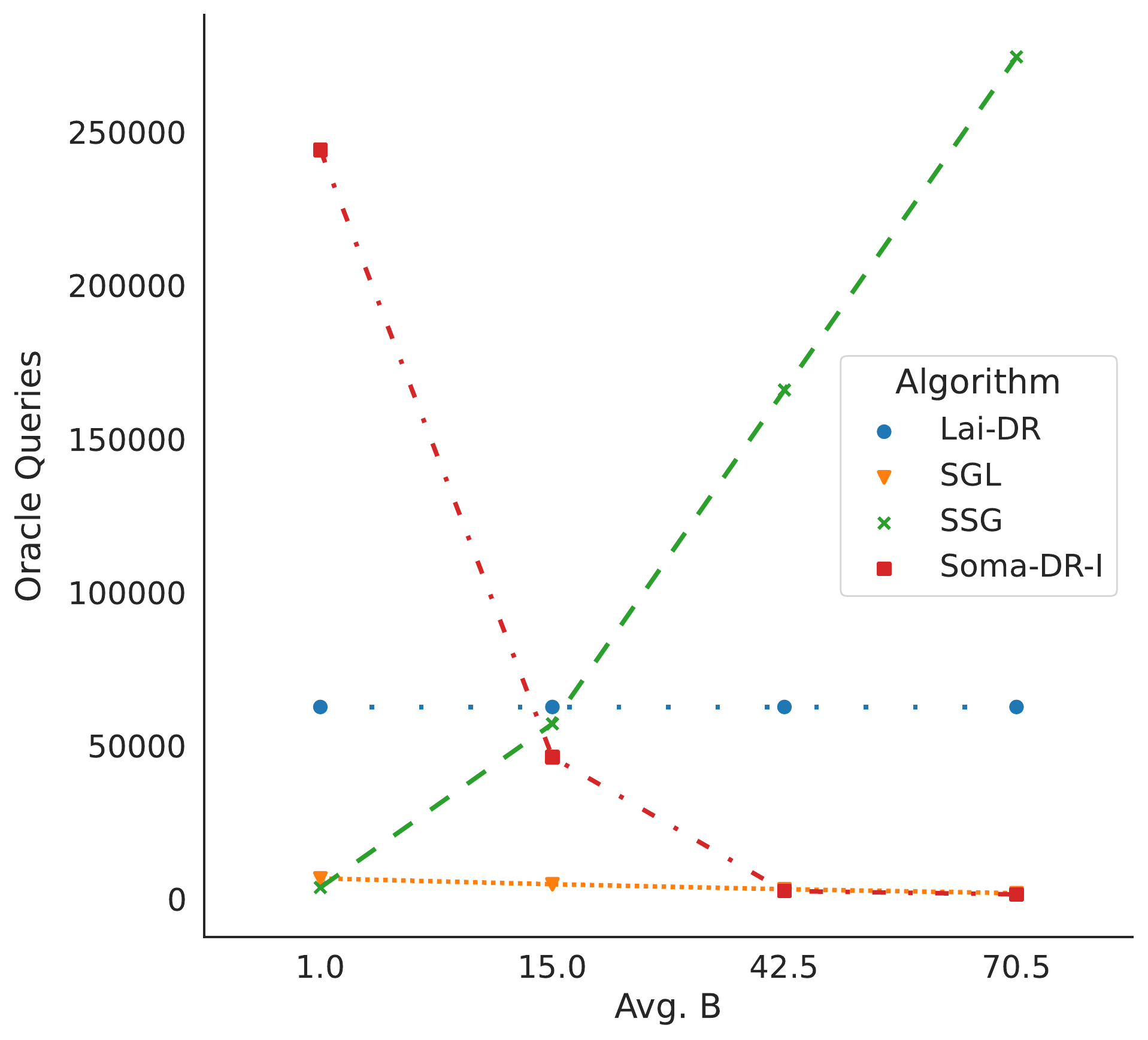}
\caption{An illustration of the growth of the number of oracle calls as a function of the average value of $\mathbf{b}$ on a synthetic DR-submodular monotone function, for all the considered algorithms. On the left, $n=200$ and $r=100$. On the right, $n=500$ and $r=125$.
It is clear that \textsc{SSG} depends positevely on the $\mathbf{b}$, whereas \textsc{Soma-DR-I} seem to present an inverse relationship (but not always).
\textsc{SGL} and \textsc{Lai-DR} do not seem to be affected by $\mathbf{b}$.
\textsc{SGL} is the only algorithm that consistently scales as both $n$ and $\mathbf{b}$ grow.}
\label{fig:demo-monotone-n_calls-by-algo-n_100-r_100}
\end{figure}

\begin{table}
\centering
\begin{tabularx}{\textwidth}{ l *{5}{Y} }
{} & {} & \multicolumn{4}{c}{Avg. Oracle Queries} \\
& n &           100 &           200 &            500 &            750 \\ \hline
Algorithm     &               &               &                &                \\
\midrule
\textsc{SGL} (this work)   & &   \textbf{1030.64} &   \textbf{1965.26} &    \textbf{3808.13} &    \textbf{5396.14} \\
\textsc{Lai-DR} \cite{lai:2019}    & &   8657.14 &  34457.14 &   62625.00 &          - \\
\textsc{SSG} \cite{mirzasoleiman:2015}       & &  16614.29 &  73978.57 &  531446.43 &  558141.00 \\
\textsc{Soma-DR-I} \cite{soma:2018} & &   7459.93 &  22261.79 &   72338.23 &          - \\
\end{tabularx}
\vspace{0.5cm}
\caption{Average number of oracle queries required by each considered algorithm as $n$ increases on the synthetic monotone DR-submodular instances. Smaller numbers are better. We see that \textsc{SGL} takes the least amount of time on average, and that the \emph{reduction from integer lattice to set domain} approach does not scale well.}
\label{tab:demo_monotone-n-vs-n_calls-pivot}
\end{table}

\begin{table}
\centering
\begin{tabularx}{\textwidth}{ l *{4}{Y} }
{} & {} & \multicolumn{3}{c}{Avg. Return Value} \\
& n &           100 &           200 &            500 \\ \hline
Algorithm     &               &               &            \\
\midrule
\textsc{SGL} (this work)               & &  8032.02          & 16262.60          &  40914.00 \\
\textsc{Lai-DR} \cite{lai:2019}        & &  4234.97          &  8528.03          &  6217.33  \\
\textsc{SSG} \cite{mirzasoleiman:2015} & &  8199.37          & \textbf{16625.77} &  \textbf{41854.79} \\
\textsc{Soma-DR-I} \cite{soma:2018}    & &  \textbf{8208.36} & 16615.21          & 	37472.54 \\
\end{tabularx}
\vspace{0.5cm}
\caption{Average value returned by each considered algorithm as $n$ increases on the synthetic monotone DR-submodular instances. Higher numbers are preferred. \textsc{SSG} tops the ranking in terms of value precision, with \textsc{SGL} coming close, less than $3\%$ far. Surprisingly, \textsc{Lai-DR}'s approximation performs much worse than expected compared to the other algorithms.}
\label{tab:demo_monotone-n-vs-n_calls-pivot}
\end{table}

\section{Conclusion and Future Work}

In this extended abstract, we considered the problem of maximizing a monotone integer lattice and DR-submodular functions subject to cardinality constraints and proposed one randomized algorithm for this problem, inspired by the random sampling technique that, until now, has only been applied to set submodular functions \cite{mirzasoleiman:2015}. We showed that the algorithm achieves state-of-of-the-art approximation ratio for monotone DR-submodular functions.
Experimentally, we have shown that the algorithm requires
considerably fewer value-oracle queries than state-of-the-art deterministic algorithms for the same problem, as well as the naive baseline lifting the integer lattice to a set domain.



\bibliographystyle{IEEEtran}
\bibliography{main.bib}

\end{document}